\newtheorem{proposition}{Proposition}
\begin{document}

\title{End-to-End Known-Interference Cancellation (E2E-KIC) with Multi-Hop
Interference}

\author{\IEEEauthorblockN{Shiqiang Wang\IEEEauthorrefmark{1}\IEEEauthorrefmark{2},
Qingyang Song\IEEEauthorrefmark{1}, Kailai Wu\IEEEauthorrefmark{1},
Fanzhao Wang\IEEEauthorrefmark{1}, Lei Guo\IEEEauthorrefmark{1}}\IEEEauthorblockA{\IEEEauthorrefmark{1}School
of Computer Science and Engnineering, Northeastern University, Shenyang,
China} \IEEEauthorblockA{\IEEEauthorrefmark{2}Department of
Electrical and Electronic Engineering, Imperial College London, United
Kingdom} \IEEEauthorblockA{Email: shiqiang.wang@ieee.org, songqingyang@mail.neu.edu.cn,
kailaiwu@gmail.com,\\ fanzhaowang@gmail.com, guolei@cse.neu.edu.cn}}
\maketitle
\begin{abstract}
Recently, end-to-end known-interference cancellation (E2E-KIC) has
been proposed as a promising technique for wireless networks. It sequentially
cancels out the known interferences at each node so that wireless
multi-hop transmission can achieve a similar throughput as single-hop
transmission. Existing work on E2E-KIC assumed that the interference
of a transmitter to those nodes outside the transmitter's communication
range is negligible. In practice, however, this assumption is not
always valid. There are many cases where a transmitter causes notable
interference to nodes beyond its communication distance. From a wireless
networking perspective, such interference is caused by a node to other
nodes that are multiple hops away, thus we call it \emph{multi-hop
interference}. The presence of multi-hop interference poses fundamental
challenges to E2E-KIC, where known procedures cannot be directly applied.
In this paper, we propose an E2E-KIC approach which allows the existence
of multi-hop interference. In particular, we present a method that
iteratively cancels \emph{unknown }interferences transmitted across
multiple hops. We analyze mathematically why and when this approach
works and the performance of it. The analysis is further supported
by numerical results. \end{abstract}

\begin{IEEEkeywords}
Communications, end-to-end known-interference cancellation (E2E-KIC),
multi-hop transmission, wireless ad hoc networks
\end{IEEEkeywords}

\section{Introduction}

Wireless multi-hop networks have attracted extensive interest due
to its flexibility of deployment and adaptability to changing network
topologies. These features are particularly useful for many emerging
applications today, such as vehicular networking. One big problem
in multi-hop networks is its performance degradation compared to single-hop
networks. This is illustrated in Fig. \ref{fig:basicConcept}(a),
where node $1$ can only transmit one packet every three timeslots,
due to the half-duplex nature of conventional radio transceivers and
to avoid collision of adjacent packets. Obviously, if the transmission
between the source and destination nodes are carried out in a single
hop, node $1$ can transmit one packet \emph{per }timeslot\footnote{In a large network, there may exist other transmitting nodes for which
collision needs to be avoided, but we ignore that case here for simplicity.
We also ignore packet losses, congestion, etc.} to node $4$. It turns out that in this simple example, multi-hop
transmission brings only one third of the throughput of single-hop
transmission. A natural question is: Can we close this gap?

\begin{figure}
\center{\includegraphics[width=0.95\linewidth]{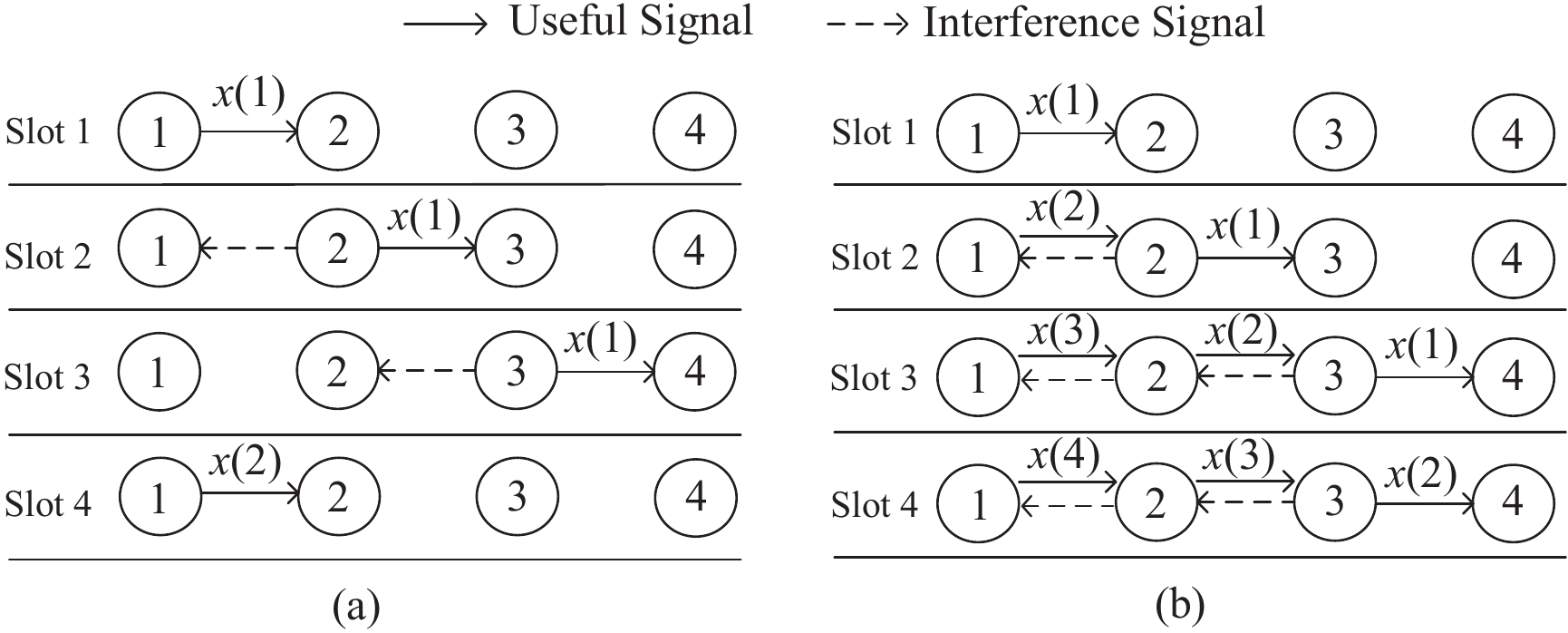}}

\caption{Procedure of conventional multi-hop communication (a) and E2E-KIC
(b), where node $1$ sends packets to node $4$ via nodes $2$ and
$3$.}
\label{fig:basicConcept}
\end{figure}

Efforts have been made to improve the performance of multi-hop networks
with advanced physical-layer techniques. These techniques are generally
based on the idea that interference signals known to the receiver
a priori can be cancelled out from a superposed signal, where the
superposed signal contains an unknown signal carrying new data and
a set of known interferences. Such techniques are termed as known-interference
cancellation (KIC) \cite{BKIC}. 

One representative KIC technique is physical-layer network coding
(PNC), where the relay node encodes the superposed signal into a new
signal and broadcasts it out to nodes with decoding capability. The
typical application of PNC is in two-way relay networks, which have
two end nodes exchanging data via a common relay \cite{PNCTwoWay}.
In this setting, PNC is able to transfer two packets every two slots
for two symmetric data flows\footnote{We refer to bidirectional data flows as two separate unidirectional
data flows in this paper.} in opposite directions. On average, this corresponds to a throughput
of $0.5\textrm{ packet/(slot}\cdot\textrm{flow)}$. While PNC can
be extended to linear networks with more than two hops \cite{refHotTopic},
its throughput cannot exceed $0.5\textrm{ packet/(slot}\cdot\textrm{flow)}$.

Another representative KIC technique is full duplex (FD) communication
\cite{fullDuplex,sundaresan2014full}. Here, wireless nodes can transmit
and receive at the same time, by effectively cancelling its transmitted
signal at the receiving antenna in real time. FD can achieve a throughput
of $1\textrm{ packet/(slot}\cdot\textrm{flow)}$ for bidirectional
single-hop communications (containing two opposite flows), which is
the same as what one can achieve for unidirectional single-hop communications.
However, it is non-straightforward to directly extend FD to multi-hop
communications.

End-to-end KIC (E2E-KIC) was recently proposed to boost up the throughput
of wireless multi-hop networks. Unidirectional E2E-KIC supporting
a single flow can be achieved by equipping nodes with FD transceivers
\cite{SE2EKICMAC}; bidirectional E2E-KIC supporting two opposite
flows can be achieved by leveraging both PNC and FD techniques \cite{BE2E-KIC}.
It was shown that both unidirectional and bidirectional E2E-KIC can
achieve a throughput of $1\textrm{ packet/(slot}\cdot\textrm{flow)}$
when ignoring a few idle timeslots caused by propagation delay. This
implies that \emph{the throughput of E2E-KIC is (almost) the same
as the single-hop counterparts}. 

The basic concept of E2E-KIC\footnote{We only focus on unidirectional E2E-KIC in this paper, and refer to
it as E2E-KIC.} is shown in Fig. \ref{fig:basicConcept}(b). The source node $1$
starts the transmission by sending out its first packet $x(1)$ to
node $2$ in slot $1$. Then, in slot $2$, node $1$ sends its second
packet $x(2)$ and node $2$ relays the first packet $x(1)$ to node
$3$. Node $2$ can receive $x(2)$ while transmitting $x(1)$ due
to its FD capability. In slot $3$, node $1$ sends $x(3)$, node
$2$ sends $x(2)$, and node $3$ sends $x(1)$. Now, node $2$ can
receive $x(3)$ because it can cancel out $x(2)$ transmitted by itself
as well as $x(1)$ transmitted by node $3$. It can cancel out $x(1)$
because it has received $x(1)$ before and knows its contents. Node~$3$
can receive $x(2)$ due to its FD capability. Slot $4$ and all remaining
slots are then analogous. With this approach, node $1$ is able to
transmit one packet per slot, and node $4$ is able to receive one
packet per slot starting from slot~$3$.

In this example as well as existing work \cite{SE2EKICMAC,BE2E-KIC},
multi-hop interference has not been considered, i.e., it has been
assumed that the signal transmitted by node $1$ is negligible at
node $3$, for instance. If this is not the case, the existing E2E-KIC
scheme becomes inapplicable. Consider slot $2$ in Fig. \ref{fig:basicConcept}(b)
as an example, if the signal transmitted by node $1$ causes strong
interference at node $3$, node $3$ may not be able to correctly
receive $x(1)$ from node~$2$. Unfortunately, such non-negligible
\emph{multi-hop interference }widely exist in practical scenarios.
It is quite possible that two nodes are not close enough to correctly
receive the transmitted data, but also not far away enough to ignore
the transmitter's interference signal \cite{interferenceRange}. Therefore,
we need to investigate whether and how E2E-KIC works in scenarios
with multi-hop interference. In this paper, we propose an E2E-KIC
scheme that is feasible for such scenarios.

\section{System Model \label{sec:System-Model}}

Similar to related work \cite{refHotTopic,SE2EKICMAC,BE2E-KIC}, we
focus on a chain topology with $N$ nodes. We note that these nodes
can be part of a larger network with arbitrary topology, and the chain
sub-network can be obtained via MAC protocols that support E2E-KIC
\cite{SE2EKICMAC}. Nodes in the chain topology are sequentially indexed
with $1,2,...,N$, where node~$1$ is the source node, node~$N$
is the destination node. We use $t$ to denote the timeslot index
starting at $1$, and use $x(t)$ to denote the signal (containing
data packet\footnote{We use $x(t)$ to denote both the data packet and its corresponding
signal in this paper.}) that the \emph{source node} (indexed $1$) sends out in timeslot
$t$. Let $h_{ji}$ denote the channel coefficient from node $j$
to node $i$. We consider frequency-flat, slow-varying channels in
this paper, so that $h_{ji}$ does not vary over our time of interest.
We also assume that $h_{ji}$ is known a priori, while anticipating
that techniques similar to blind KIC \cite{BKIC} or correlation-based
approaches \cite{zigzagDecoding} may be applied for cases with unknown
$h_{ji}$.

\begin{figure}
\center{\includegraphics[width=0.7\linewidth]{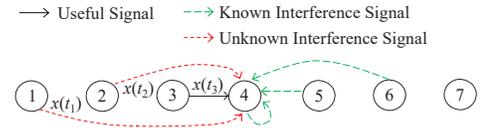}}

\caption{Useful and (multi-hop) interference signals at node $4$.}


\label{fig:multihopInteference}
\end{figure}

We assume that when a node receives a superposed signal, it can successfully
cancel out all its known signals. For example (see Fig. \ref{fig:multihopInteference}),
when $N=7$, node $4$ knows the signals sent by nodes $5$ and $6$
because they were previously relayed by node $4$. Thus, node~$4$
can cancel out the signals transmitted by nodes~$5$ and $6$ as
well as its own transmitted signal (due to FD capability). Node~$4$
intends to receive the signal sent by node~$3$, while nodes~$1$
and $2$ are potential multi-hop interferers transmitting\emph{ unknown
signals }for node~$4$. 

For an arbitrary node $i$ (node $4$ in the above example), after
cancelling out all known signals from nodes $i$ (itself), $i+1,...,N$,
the remaining received signal in slot $t$ is
\begin{equation}
y_{i}(t)=\sum_{j=1}^{i-1}h_{ji}\cdot x\left(t_{j}\right)+z_{i}(t),\label{eq:systemModel}
\end{equation}
where $x\left(t_{i-1}\right)$ is the packet that node $i$ intends
to receive (which is sent by node $i-1$), $\sum_{j=1}^{i-2}h_{ji}\cdot x\left(t_{j}\right)$
is the sum of unknown interference signals sent by nodes $1,...,i-2$,
and $z_{i}(t)$ is the noise signal with power $\sigma^{2}$. Residual
interference caused by imperfect KIC can be considered as part of
the noise, and we do not specifically study it in this paper. We use
$P_{T}$ to denote the transmission power so that $\mathrm{E}\left[|x(t)|^{2}\right]=P_{T}$. 

The variables $t_{j}$ in (\ref{eq:systemModel}) stand for the slot
index when the packet has been sent out by node~$1$ \emph{for the
first time} (see definition of $x(t)$ earlier). We always have $t=t_{1}>t_{2}>...>t_{i-2}>t_{i-1}$
due to the packet relaying sequence. The specific values of $t_{j}$
are related to the interference strength. For single-hop interference
(see Fig. \ref{fig:basicConcept}(b)), we always have $t_{j}-t_{j-1}=1$.
For multi-hop interference, it is possible that $t_{j}-t_{j-1}>1$
(see next section), which means that we may need to wait for more
than one slot before we can cancel out all interferences and successfully
decode the packet. We define $\Delta_{j}=t-t_{j}$ to denote the total
delay incurred for transmitting the packet from node~$1$ to node~$j$.

For simplicity, we assume that packets can always be correctly received
if the signal-to-interference-plus-noise ratio (SINR) is larger than
a threshold $\gamma$. In practice, this can be achieved by a properly
designed error-correcting code. For the convenience of analysis later
in this paper, we assume $\gamma\geq1$.

\section{Interference Cancellation Scheme}

We focus on how to cancel\footnote{Strictly speaking, instead of cancelling, we aim to reduce the strength
of unknown interferences to a satisfactory level so that the packet
can be correctly decoded. For simplicity, we use the term ``cancel''
in this paper.} the \emph{unknown} multi-hop interferences, i.e., the $\sum_{j=1}^{i-2}h_{ji}\cdot x\left(t_{j}\right)$
terms in (\ref{eq:systemModel}). Cancellation of known interferences
is analogous with existing KIC work, so we do not consider it here.

\begin{figure*}
\begin{align}
g_{i,m}(t) & \triangleq g_{i,m-1}(t)\pm\sum_{j_{1}=1}^{i-2}\sum_{j_{2}=1}^{i-2}\cdots\sum_{j_{m-1}=1}^{i-2}\sum_{j_{m}=1}^{i-2}\frac{h_{j_{1}i}h_{j_{2}i}\cdots h_{j_{m}i}}{h_{(i-1),i}^{m}}y_{i}(t+\delta_{j_{1}}+\delta_{j_{2}}+\ldots+\delta_{j_{m}})\label{eq:mthIteration1}\\
 & =h_{(i-1),i}x(t_{i-1})\pm\sum_{j_{1}=1}^{i-2}\sum_{j_{2}=1}^{i-2}\cdots\sum_{j_{m+1}=1}^{i-2}\frac{h_{j_{1}i}h_{j_{2}i}\cdots h_{j_{m+1}i}}{h_{(i-1),i}^{m}}x(t_{i-1}+\delta_{j_{1}}+\delta_{j_{2}}+\ldots+\delta_{j_{m+1}})\nonumber \\
 & \quad\pm\sum_{j_{1}=1}^{i-2}\sum_{j_{2}=1}^{i-2}\cdots\sum_{j_{m}=1}^{i-2}\frac{h_{j_{1}i}h_{j_{2}i}\cdots h_{j_{m}i}}{h_{(i-1),i}^{m}}z_{i}(t+\delta_{j_{1}}+\delta_{j_{2}}+\ldots+\delta_{j_{m}})\mp\ldots\nonumber \\
 & \quad+\sum_{j_{1}=1}^{i-2}\sum_{j_{2}=1}^{i-2}\frac{h_{j_{1}i}h_{j_{2}i}}{h_{(i-1),i}^{2}}z_{i}(t+\delta_{j_{1}}+\delta_{j_{2}})-\sum_{j_{1}=1}^{i-2}\frac{h_{j_{1}i}}{h_{(i-1),i}}z_{i}(t+\delta_{j_{1}})+z_{i}(t)\label{eq:mthIteration2}
\end{align}

\hrulefill
\end{figure*}

\subsection{Constructive Example}

\begin{figure}
\center{\includegraphics[width=0.65\linewidth]{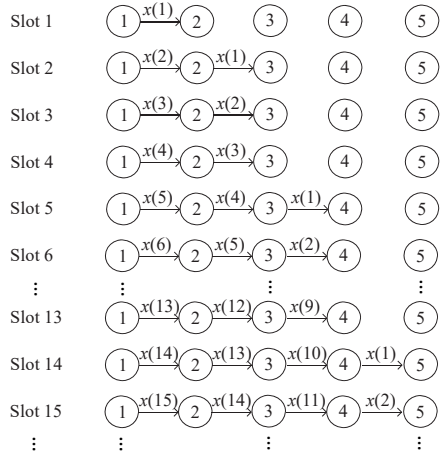}}

\caption{Example with $N=5$ showing the packet transmission procedure when
there exists multi-hop interference.}
\label{fig:constructiveExample}
\end{figure}

We first illustrate the interference cancellation procedure using
a constructive example with $N=5$, as shown in Fig. \ref{fig:constructiveExample}.
Same as in the existing E2E-KIC approach, node~$1$ transmits one
packet $x(t)$ in every slot $t\in\{1,2,...\}$. Because there is
no unknown multi-hop interference at the receiver of node~$2$, according
to (\ref{eq:systemModel}), the signal that node~$2$ receives in
slot $t\geq1$ (after cancelling any known interferences) is
\[
y_{2}(t)=h_{12}x\left(t\right)+z_{2}(t),
\]
from which it can successfully decode packet $x(t)$. 

Node~$2$ sends out its received packet in the next slot, so that
the received signal at node~$3$ in slot $t\geq2$ is
\[
y_{3}(t)=h_{13}x\left(t\right)+h_{23}x\left(t-1\right)+z_{3}(t),
\]
where $h_{23}x\left(t-1\right)$ carries the packet that node~$3$
intends to receive, and $h_{13}x\left(t\right)$ is the interference
signal\footnote{Obviously, if $h_{13}x\left(t\right)$ is strong enough, node~$3$
can receive the packet $x(t)$ sent by node~$1$ directly without
needing node~$2$ as relay. However, we assume here that $h_{13}x\left(t\right)$
is not strong enough for correct reception, because otherwise the
underlying routing mechanism will not choose node~$2$ as relay,
but $h_{13}x\left(t\right)$ may cause a remarkable level of interference
that affects the correct reception of $x\left(t-1\right)$ from $h_{23}x\left(t-1\right)$.}. 

Upon receiving $y_{3}(t)$, node~$3$ checks whether it can correctly
decode the packet. If not, it assumes that this is due to the unknown
multi-hop interference caused by node~$1$, in which case it waits
for signals received in additional slots. In slot $t+1$, node~$3$
receives
\[
y_{3}(t+1)=h_{13}x\left(t+1\right)+h_{23}x\left(t\right)+z_{3}(t+1),
\]
which contains a strong signal $h_{23}x\left(t\right)$ and a weak
signal $h_{13}x\left(t+1\right)$. 

Node~$3$ can use the strong signal in $y_{3}(t+1)$ that contains
$x(t)$ to eliminate the interference signal $h_{13}x\left(t\right)$
in $y_{3}(t)$. Explicitly, node~$3$ calculates the following:
\begin{align*}
 & g_{3,1}(t)\\
 & =y_{3}(t)-\frac{h_{13}}{h_{23}}y_{3}(t+1)\\
 & =h_{23}x\left(t-1\right)+z_{3}(t)-\frac{h_{13}}{h_{23}}\left(h_{13}x\left(t+1\right)+z_{3}(t+1)\right).
\end{align*}
The resulting signal now contains the useful signal $h_{23}x\left(t-1\right)$
and the interference-plus-noise signal $z_{3}(t)-\frac{h_{13}}{h_{23}}\left(h_{13}x\left(t+1\right)+z_{3}(t+1)\right)$.
We expect that the remaining interference-plus-noise signal in $g_{3,1}(t)$
is much smaller than $h_{13}x\left(t\right)+z_{3}(t)$ in $y_{3}(t)$,
because $\left|h_{13}\right|^{2}\ll\left|h_{23}\right|^{2}$ and the
additional noise signal $z_{3}(t+1)$ is usually much weaker than
the data-carrying signal. A rigorous analysis on the remaining interference-plus-noise
power will be given later in this paper. If the remaining interference-plus-noise
is still too strong, node~$3$ can wait for one more slot to receive
\[
y_{3}(t+2)=h_{13}x\left(t+2\right)+h_{23}x\left(t+1\right)+z_{3}(t+2)
\]
and perform another round of cancellation to cancel out the term involving
$x\left(t+1\right)$ in $g_{3,1}(t)$, yielding
\begin{align*}
g_{3,2}(t) & =g_{3,1}(t)+\frac{h_{13}^{2}}{h_{23}^{2}}y_{3}(t+2)\\
 & =h_{23}x\left(t-1\right)+z_{3}(t)-\frac{h_{13}}{h_{23}}z_{3}(t+1)\\
 & \quad+\frac{h_{13}^{2}}{h_{23}^{2}}\left(h_{13}x\left(t+2\right)+z_{3}(t+2)\right).
\end{align*}
Now, the remaining interference-plus-noise term is further smaller.
Suppose node~$3$ can now decode $x(t-1)$ from $g_{3,2}(t)$, it
can then send out $x(t-1)$ in the next slot $t+3$, so that node
$4$ receives the following signal for $t\geq5$:
\[
y_{4}(t)=h_{14}x\left(t\right)+h_{24}x\left(t-1\right)+h_{34}x\left(t-4\right)+z_{4}(t).
\]
Using the definition of $t_{j}$ and $\Delta_{j}$ in Section \ref{sec:System-Model},
we have $\Delta_{3}=t-t_{3}=4$, because node~$3$ transmits packet
$x(t-4)$ in slot $t$.

The operation is similar for node~$4$ and all remaining nodes. From
$y_{4}(t)$, node~$4$ intends to receive $x\left(t-4\right)$ transmitted
by node~$3$, and has to cancel out the signals $h_{14}x\left(t\right)$
and $h_{24}x\left(t-1\right)$ from nodes~$1$ and $2$, respectively.
Node~$4$ waits for four additional slots until it has received the
following signals:
\begin{align*}
y_{4}(t\!+\!3) & =h_{14}x\!\left(t\!+\!3\right)+h_{24}x\!\left(t\!+\!2\right)+h_{34}x\!\left(t\!-\!1\right)+z_{4}(t\!+\!3)\\
y_{4}(t\!+\!4) & =h_{14}x\left(t\!+\!4\right)+h_{24}x\left(t\!+\!3\right)+h_{34}x\left(t\right)+z_{4}(t\!+\!4).
\end{align*}
It then computes 
\[
g_{4,1}(t)=y_{4}(t)-\frac{h_{14}}{h_{34}}y_{4}(t+4)-\frac{h_{24}}{h_{34}}y_{4}(t+2)
\]
to cancel out $h_{14}x\left(t\right)$ and $h_{24}x\left(t-1\right)$
from $y_{4}(t)$, which introduces additional interference terms (of
much lower strength) involving $x(t+2)$, $x(t+3)$, and $x(t+4)$.
If needed, these can be cancelled out again using $y_{4}(t+6)$, $y_{4}(t+7)$,
and $y_{4}(t+8)$ received in subsequent slots. Assuming the SINR
is then sufficiently large for decoding $x\left(t-4\right)$, node~$4$
can send out $x\left(t-4\right)$ in slot $t+9$, and we have $\Delta_{4}=13$. 

In slots $t\geq14$, the destination node (indexed~$5$) receives
\[
y_{5}(t)\!=\!h_{15}x\!\left(t\right)\!+\!h_{25}x\!\left(t\!-\!1\right)\!+\!h_{35}x\!\left(t\!-\!4\right)\!+\!h_{45}x\!\left(t\!-\!13\right)\!+\!z_{5}(t).
\]
It intends to decode packet $x(t-13)$ which is carried in the strongest
signal component of $y_{5}(t)$. The remaining interference terms
can be cancelled using a similar approach as above.

\emph{A Note on Time Shifting:} Because we consider fixed values of
$h_{ji}$ and SINR threshold $\gamma$, the number of additional signals
(received after slot $t$) a node requires for successful interference
cancellation remains unchanged, thus the values of $\Delta_{j}$ are
also fixed\footnote{For more general cases with variable $h_{ji}$ and $\gamma$, our
scheme is still applicable. The only difference is that $\Delta_{j}$
may increase over time (or alternatively, we can fix it at a large
value at the beginning), and there may be additional idle slots at
certain nodes after packet transmission has started. We leave detailed
studies on this aspect for future work, and assume fixed $\Delta_{j}$
values in this paper.}. Therefore, from (\ref{eq:systemModel}), we have
\begin{equation}
y_{i}(t+\tau)=\sum_{j=1}^{i-1}h_{ji}\cdot x\left(t_{j}+\tau\right)+z_{i}(t+\tau)\label{eq:systemModelShifted}
\end{equation}
for integer values of $\tau$. This expression has also been used
while discussing the above example. 

Also due to fixed $\Delta_{j}$, \emph{each node is able to keep decoding
and sending new packets every slot }after it has successfully received
its first packet, as shown in Fig. \ref{fig:constructiveExample}.
This also holds for the general case presented next.

\subsection{General Case}

In general, when an arbitrary node~$i\geq3$ receives $y_{i}(t)$
(see (\ref{eq:systemModel})), we aim to cancel out all the interference
signals from nodes $1,2,...,i-2$, so that only $h_{(i-1),i}x\left(t_{i-1}\right)$
remains, from which we can decode $x\left(t_{i-1}\right)$. 

We define $\delta_{j}=t_{j}-t_{i-1}$ for $j<i-1$. According to (\ref{eq:systemModelShifted}),
we have
\begin{align*}
y_{i}(t+\delta_{j_{1}}) & =\sum_{j=1}^{i-1}h_{ji}x\left(t_{j}+\delta_{j_{1}}\right)+z_{i}(t+\delta_{j_{1}})\\
 & =h_{(i-1),i}x\left(t_{j_{1}}\right)\!+\!\sum_{j=1}^{i-2}h_{ji}x\left(t_{j}\!+\!\delta_{j_{1}}\right)\!+\!z_{i}(t\!+\!\delta_{j_{1}}),
\end{align*}
where we note that $t_{j_{1}}=t_{i-1}+\delta_{j_{1}}$. This expression
shows that node $i-1$ transmits $x\left(t_{j}\right)$ in slot $t+\delta_{j}$,
so we can use $y_{i}(t+\delta_{j})$ to cancel $h_{ji}x\left(t_{j}\right)$
in $y_{i}(t)$. 

After waiting for $\max_{j<i-1}\delta_{j}=\Delta_{i-1}$ slots, node~$i$
can cancel all the interferences from $y_{i}(t)$ by calculating
\begin{align}
g_{i,1}(t) & =y_{i}(t)-\sum_{j_{1}=1}^{i-2}\frac{h_{j_{1}i}}{h_{(i-1),i}}y_{i}(t+\delta_{j_{1}})\label{eq:gEqu1}\\
 & =h_{(i-1),i}x(t_{i-1})\!-\!\sum_{j_{1}=1}^{i-2}\sum_{j_{2}=1}^{i-2}\frac{h_{j_{1}i}h_{j_{2}i}}{h_{(i-1),i}}x(t_{i-1}\!+\!\delta_{j_{1}}\!+\!\delta_{j_{2}})\nonumber \\
 & \quad-\sum_{j_{1}=1}^{i-2}\frac{h_{j_{1}i}}{h_{(i-1),i}}z_{i}(t+\delta_{j_{1}})+z_{i}(t).\label{eq:gEqu2}
\end{align}
If another round of cancellation is needed to further reduce the remaining
interference term $\sum_{j_{1}=1}^{i-2}\sum_{j_{2}=1}^{i-2}\frac{h_{j_{1}i}h_{j_{2}i}}{h_{(i-1),i}}x(t_{i-1}+\delta_{j_{1}}+\delta_{j_{2}})$,
node~$i$ waits for additional $\Delta_{i-1}$ slots and calculates
\begin{align}
g_{i,2}(t) & =g_{i,1}(t)+\sum_{j_{1}=1}^{i-2}\sum_{j_{2}=1}^{i-2}\frac{h_{j_{1}i}h_{j_{2}i}}{h_{(i-1),i}^{2}}y_{i}(t+\delta_{j_{1}}+\delta_{j_{2}})\label{eq:gEqu3}\\
 & =h_{(i-1),i}\cdot x(t_{i-1})\nonumber \\
 & \quad+\!\sum_{j_{1}=1}^{i-2}\sum_{j_{2}=1}^{i-2}\sum_{j_{3}=1}^{i-2}\frac{h_{j_{1}i}h_{j_{2}i}h_{j_{3}i}}{h_{(i-1),i}^{2}}x(t_{i-1}\!+\!\delta_{j_{1}}\!+\!\delta_{j_{2}}\!+\!\delta_{j_{3}})\nonumber \\
 & \quad+\sum_{j_{1}=1}^{i-2}\sum_{j_{2}=1}^{i-2}\frac{h_{j_{1}i}h_{j_{2}i}}{h_{(i-1),i}^{2}}z_{i}(t+\delta_{j_{1}}+\delta_{j_{2}})\nonumber \\
 & \quad-\sum_{j_{1}=1}^{i-2}\frac{h_{j_{1}i}}{h_{(i-1),i}}z_{i}(t+\delta_{j_{1}})+z_{i}(t).\label{eq:gEqu4}
\end{align}

Considering the general case, the $m$-th round of cancellation is
performed according to (\ref{eq:mthIteration1}), which can be expanded
as (\ref{eq:mthIteration2}), where we denote $g_{i,0}(t)=y_{i}(t)$
for simplicity. These expressions are obtained by iteratively applying
the following \emph{cancellation principle}: 
\begin{itemize}
\item Cancel all interference terms from $g_{i,m-1}(t)$ in the $m$-th
cancellation round, without considering any additional interference
terms that are brought in by the cancellation process (these additional
interferences are cancelled in the next cancellation round if needed).
\end{itemize}
We note that the sign of interference signals inverses in every cancellation
round, thus the ``$\pm$'' sign in (\ref{eq:mthIteration1}) takes
``$+$'' when $m$ is even and ``$-$'' when $m$ is odd. In (\ref{eq:mthIteration1})
and (\ref{eq:mthIteration2}), ``$\sum_{j_{1}=1}^{i-2}\sum_{j_{2}=1}^{i-2}\cdots\sum_{j_{m-1}=1}^{i-2}\sum_{j_{m}=1}^{i-2}$''
stands for the concatenation of $m$ sums with their respective iteration
variables $j_{1},...,j_{m}$. It can be verified that (\ref{eq:gEqu1})--(\ref{eq:gEqu4})
are special cases of (\ref{eq:mthIteration1}) and (\ref{eq:mthIteration2}).

After $m$ rounds of cancellation, $m\Delta_{i-1}$ slots have elapsed
since slot $t$. If node~$i$ can successfully decode $x\left(t_{i-1}\right)$
now, it sends out $x\left(t_{i-1}\right)$ in slot $t+m\Delta_{i-1}+1$.

\subsubsection*{End-to-End Delay}

Assume that $m$ is the same for different nodes~$i$, we have $\Delta_{i}-\Delta_{i-1}=t_{i-1}-t_{i}=m\Delta_{i-1}+1$,
yielding
\begin{equation}
\Delta_{i}=(m+1)\Delta_{i-1}+1,\label{eq:diffDelta}
\end{equation}
which is a difference equation with respect to $\Delta_{i}$. Noting
that $\Delta_{1}=0$, we can solve the difference equation (\ref{eq:diffDelta})
as
\begin{equation}
\Delta_{i}=\begin{cases}
i-1 & \textrm{if }m=0\\
\frac{(m+1)^{i-1}-1}{m}, & \textrm{if }m>0
\end{cases}.\label{eq:diffDeltaSolution}
\end{equation}
We note that $\Delta_{i}$ is the \emph{end-to-end delay }(expressed
in the number of timeslots) of packet transmission from node~$1$
to node~$i$. Therefore, (\ref{eq:diffDeltaSolution}) is an expression
for calculating the delay. The total delay from node~$1$ to node~$N$
is $\Delta_{N}$.

\section{When Does It Work?}

In the following, we discuss insights behind the aforementioned interference
cancellation scheme, and derive conditions under which the proposed
scheme works.

\subsection{Upper Bound of Interference-Plus-Noise Power}

We note that in (\ref{eq:mthIteration2}), all the signal components
starting from the second term (i.e., excluding $h_{(i-1),i}x(t_{i-1})$)
are either interference or noise. Let $P_{I,m}$ denote the total
power of these interference-plus-noise signals in $g_{i,m}(t)$. We
have an upper bound of $P_{I,m}$. 

\begin{proposition} For $i\geq3$, $P_{I,m}$ has the following upper
bound: 
\begin{equation}
P_{I,m}\leq\left|h_{(i-2),i}\right|^{2}(i-2)\rho^{m}P_{T}+\sigma^{2}\frac{1-\rho^{m+1}}{1-\rho},\label{eq:PIBoundRho}
\end{equation}
where we recall that $P_{T}$ denotes the transmission power and $\sigma^{2}$
denotes the noise power, $\rho$ is defined as 
\begin{equation}
\rho=\frac{\left|h_{(i-2),i}\right|^{2}(i-2)}{\left|h_{(i-1),i}\right|^{2}}.\label{eq:rhoDef}
\end{equation}
\end{proposition}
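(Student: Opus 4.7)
The plan is to extract the total interference-plus-noise part of $g_{i,m}(t)$ directly from the expanded form (\ref{eq:mthIteration2}), compute its power by exploiting independence of the data symbols and of the noise samples, and then upper bound the resulting expression by invoking the chain-topology property that links further away have weaker channels.

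First I would identify the structure of (\ref{eq:mthIteration2}). After removing the useful term $h_{(i-1),i}x(t_{i-1})$, what remains is a single \emph{order-$(m{+}1)$ residual interference} sum involving products of $m+1$ channel coefficients, plus a finite collection of noise terms indexed by the cancellation order $k = 0, 1, \ldots, m$. I would then use the standard modelling assumption $\mathrm{E}[|x(t)|^2]=P_T$ with data symbols from different slots uncorrelated, so that the expected power of the residual interference is simply the sum of the squared magnitudes of its coefficients:
\begin{equation*}
P_{\text{int}} = \frac{P_T}{\left|h_{(i-1),i}\right|^{2m}} \sum_{j_1=1}^{i-2}\!\cdots\!\sum_{j_{m+1}=1}^{i-2} \left|h_{j_1 i}\right|^2 \cdots \left|h_{j_{m+1} i}\right|^2 = \frac{P_T \left(\sum_{j=1}^{i-2}\left|h_{ji}\right|^2\right)^{m+1}}{\left|h_{(i-1),i}\right|^{2m}}.
\end{equation*}
The key inequality is then $|h_{ji}|^2 \leq |h_{(i-2),i}|^2$ for every $j \leq i-2$, which follows from the chain topology: nodes $1,\ldots,i-3$ are farther from node~$i$ than node~$i-2$. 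This yields $\sum_{j=1}^{i-2}|h_{ji}|^2 \leq (i-2)|h_{(i-2),i}|^2$, so $P_{\text{int}} \leq (i-2)|h_{(i-2),i}|^2 P_T \rho^m$ by the definition (\ref{eq:rhoDef}) of $\rho$. This produces the first term of the claimed bound.

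Next I would handle the noise. Grouping the $k$-th order noise terms in (\ref{eq:mthIteration2}) and invoking independence of the $z_i(\cdot)$ samples (or, more carefully, bounding by the sum of individual term variances even if some time arguments happen to coincide), the power at order $k$ is at most
\begin{equation*}
\frac{\sigma^2}{\left|h_{(i-1),i}\right|^{2k}}\!\left(\sum_{j=1}^{i-2}|h_{ji}|^2\right)^{\!k} \leq \sigma^2 \rho^k.
\end{equation*}
Summing a finite geometric series over $k=0,1,\ldots,m$ gives the noise contribution $\sigma^2 (1-\rho^{m+1})/(1-\rho)$, which is exactly the second term in (\ref{eq:PIBoundRho}). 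Adding the interference and noise bounds completes the proof.

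The main obstacle is the independence argument when two different multi-indices $(j_1,\ldots,j_k)$ yield the same cumulative delay $\delta_{j_1}+\cdots+\delta_{j_k}$, so that nominally separate terms actually reference the same symbol or noise sample; in that case the variance is not simply the sum of the coefficient magnitudes squared. I would address this either by invoking the model's implicit assumption that distinct data symbols (and noise samples at distinct times) are uncorrelated and treating the rare coincident case as absorbed by the crude $|h_{ji}|^2 \leq |h_{(i-2),i}|^2$ majorisation, or by replacing ``$=$'' with ``$\leq$'' via the triangle inequality on the $L^2$ norm, which is what makes the final statement a genuine upper bound rather than an identity.
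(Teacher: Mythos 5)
Your proposal is correct and follows essentially the same route as the paper: take the power of each interference and noise term in (\ref{eq:mthIteration2}) assuming uncorrelated symbols/noise samples, majorise every $|h_{ji}|^{2}$ (for $j\le i-2$) by $\left|h_{(i-2),i}\right|^{2}$ to obtain the $(i-2)\rho^{m}$ factor, and sum the resulting geometric series for the noise contribution. Your closing remark about coincident delay sums flags a subtlety the paper silently ignores (it treats all terms as uncorrelated without comment), but this does not change the argument.
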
 \begin{proof} Because $|h_{ji}|^{2}\leq\left|h_{(i-2),i}\right|^{2}$
for $j\in[1,i-2]$, by finding the power of $g_{i,m}(t)$ from its
expression (\ref{eq:mthIteration2}), replacing the coefficients $\left|h_{ji}\right|^{2}$
at the numerators with the upper bound $\left|h_{(i-2),i}\right|^{2}$,
and noting that the noise-related terms constitute a geometric series,
we have
\begin{align*}
P_{I,m} & \leq\sum_{j_{1}=1}^{i-2}\sum_{j_{2}=1}^{i-2}\cdots\sum_{j_{m+1}=1}^{i-2}\frac{\left|h_{(i-2),i}\right|^{2(m+1)}}{\left|h_{(i-1),i}\right|^{2m}}P_{T}\\
 & \quad+\sigma^{2}\sum_{\theta=0}^{m}\frac{\left|h_{(i-2),i}\right|^{2\theta}}{\left|h_{(i-1),i}\right|^{2\theta}}(i-2)^{\theta}\\
 & =\left|h_{(i-2),i}\right|^{2}(i-2)\left(\frac{\left|h_{(i-2),i}\right|^{2}(i-2)}{\left|h_{(i-1),i}\right|^{2}}\right)^{m}P_{T}\\
 & \quad+\sigma^{2}\sum_{\theta=0}^{m}\left(\frac{\left|h_{(i-2),i}\right|^{2}(i-2)}{\left|h_{(i-1),i}\right|^{2}}\right)^{\theta}\\
 & =\left|h_{(i-2),i}\right|^{2}(i-2)\rho^{m}P_{T}+\sigma^{2}\frac{1-\rho^{m+1}}{1-\rho}.
\end{align*}
\end{proof}

\subsection{Sufficient Condition for Successful Packet Reception}

\begin{proposition} \label{prop:cancellationRound} Node~$i\geq3$
can successfully decode $x(t_{i-1})$ from $g_{i,m}(t)$ when the
following conditions hold:
\begin{align}
i & <\frac{\left|h_{(i-1),i}\right|^{2}}{\left|h_{(i-2),i}\right|^{2}}-\frac{\gamma\sigma^{2}}{\left|h_{(i-2),i}\right|^{2}P_{T}}+2\label{eq:whenToWorkConditionSimplified1}\\
m & \geq\log_{\rho}\left(\frac{\frac{P_{T}}{\gamma}-\frac{\sigma^{2}}{\left|h_{(i-1),i}\right|^{2}-\left|h_{(i-2),i}\right|^{2}(i-2)}}{P_{T}-\frac{\sigma^{2}}{\left|h_{(i-1),i}\right|^{2}-\left|h_{(i-2),i}\right|^{2}(i-2)}}\right)-1.\label{eq:whenToWorkConditionSimplified2}
\end{align}
\end{proposition}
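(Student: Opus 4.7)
The plan is to reduce the decoding condition to a single SINR inequality and then invert the bound in Proposition 1 to solve for $m$. Since node~$i$ decodes $x(t_{i-1})$ from $g_{i,m}(t)$ whenever the SINR exceeds $\gamma$, and since the useful signal in $g_{i,m}(t)$ (the first term of (\ref{eq:mthIteration2})) has power $|h_{(i-1),i}|^2 P_T$, successful decoding is equivalent to
\[
P_{I,m} \leq \frac{|h_{(i-1),i}|^2 P_T}{\gamma}.
\]
So the first step is to substitute the bound (\ref{eq:PIBoundRho}) from Proposition~1 into this inequality; if we can force the upper bound to satisfy the SINR requirement, the true $P_{I,m}$ does too.

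Next, I would manipulate the resulting inequality to isolate $\rho^m$. Writing $A=|h_{(i-2),i}|^2(i-2)$ and $B=|h_{(i-1),i}|^2$ so that $\rho=A/B$, the inequality becomes $A\rho^m P_T + \sigma^2(1-\rho^{m+1})/(1-\rho) \leq B P_T/\gamma$. Collecting the $\rho^m$ and constant terms and using the identities $\sigma^2\rho/(1-\rho)=\sigma^2 A/(B-A)$ and $\sigma^2/(1-\rho)=\sigma^2 B/(B-A)$, this factors cleanly as
\[
\rho^{m+1} \leq \frac{P_T/\gamma - \sigma^2/(B-A)}{P_T - \sigma^2/(B-A)}.
\]
Condition (\ref{eq:whenToWorkConditionSimplified1}) rearranges exactly to $B-A > \gamma\sigma^2/P_T$, which simultaneously (i) forces $\rho<1$ so the geometric series step is valid, (ii) makes the denominator $P_T-\sigma^2/(B-A)$ positive, and (iii) makes the numerator $P_T/\gamma-\sigma^2/(B-A)$ positive, i.e.\ ensures that \emph{some} finite $m$ can satisfy the SINR requirement at all. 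I would highlight this asymptotic-feasibility interpretation explicitly.

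Finally, I would take $\log_\rho$ of both sides to solve for $m$. Here lies the one subtlety that needs careful handling: because $\rho<1$, the function $\log_\rho$ is \emph{decreasing}, so the inequality direction reverses, yielding $m+1 \geq \log_\rho\!\bigl((P_T/\gamma-C)/(P_T-C)\bigr)$ with $C=\sigma^2/(B-A)$. Rearranging gives precisely (\ref{eq:whenToWorkConditionSimplified2}). The main obstacle is therefore not the algebra itself but bookkeeping around the sign/monotonicity of $\log_\rho$ and verifying that condition (\ref{eq:whenToWorkConditionSimplified1}) is exactly what is needed to make the logarithm's argument lie in $(0,1]$ (which also uses the assumption $\gamma\geq 1$ from the model); as long as this is checked carefully, the two stated conditions drop out jointly from a single SINR inequality.
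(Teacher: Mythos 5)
Your proposal is correct and follows essentially the same route as the paper's proof: substitute the Proposition~1 upper bound into the SINR threshold condition, rearrange to isolate $\rho^{m+1}$, observe that condition (\ref{eq:whenToWorkConditionSimplified1}) guarantees $\rho<1$ and the positivity of both sides, and invert via $\log_\rho$ with the inequality reversing. Your version merely spells out the ``elementary algebra'' (the identities $\sigma^{2}/(1-\rho)=\sigma^{2}B/(B-A)$ and the monotonicity of $\log_\rho$) that the paper leaves implicit.
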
 \begin{proof} We note that node~$i$ can successfully
decode $x(t_{i-1})$ when the following condition holds:
\begin{equation}
\frac{\left|h_{(i-1),i}\right|^{2}P_{T}}{P_{I,m}}\geq\gamma.\label{eq:SINRTh}
\end{equation}
Combining (\ref{eq:SINRTh}) with (\ref{eq:PIBoundRho}), we obtain
a stricter condition for successful reception:
\begin{equation}
\frac{\left|h_{(i-1),i}\right|^{2}P_{T}}{\left|h_{(i-2),i}\right|^{2}(i-2)\rho^{m}P_{T}+\sigma^{2}\frac{1-\rho^{m+1}}{1-\rho}}\geq\gamma,\label{eq:ineqCancellationRoundProof1}
\end{equation}
which is equivalent to
\[
\frac{\left|h_{(i-1),i}\right|^{2}P_{T}}{\gamma}-\frac{\sigma^{2}}{1\!-\!\rho}\geq\left(\left|h_{(i-2),i}\right|^{2}(i\!-\!2)P_{T}-\frac{\sigma^{2}\rho}{1\!-\!\rho}\right)\rho^{m}.
\]
Substituting (\ref{eq:rhoDef}) into the above inequality, we know
that (\ref{eq:ineqCancellationRoundProof1}) is equivalent to
\begin{equation}
\frac{\left|h_{(i-1),i}\right|^{2}P_{T}}{\gamma}-\frac{\sigma^{2}}{1-\rho}\geq\left(\left|h_{(i-1),i}\right|^{2}P_{T}-\frac{\sigma^{2}}{1-\rho}\right)\rho^{m+1}.\label{eq:ineqCancellationRoundProof2}
\end{equation}

Using elementary algebra, we can easily verify that when the following
conditions are satisfied:
\begin{equation}
\rho<1,\label{eq:whenToWorkCondition1}
\end{equation}
\begin{equation}
\frac{\left|h_{(i-1),i}\right|^{2}P_{T}}{\gamma}-\frac{\sigma^{2}}{1-\rho}>0\label{eq:whenToWorkCondition2}
\end{equation}
(thus $\left|h_{(i-1),i}\right|^{2}P_{T}-\frac{\sigma^{2}}{1-\rho}>0$
because $\gamma\geq1$),
\begin{equation}
m\geq\log_{\rho}\left(\frac{\frac{\left|h_{(i-1),i}\right|^{2}P_{T}}{\gamma}-\frac{\sigma^{2}}{1-\rho}}{\left|h_{(i-1),i}\right|^{2}P_{T}-\frac{\sigma^{2}}{1-\rho}}\right)-1,\label{eq:whenToWorkCondition3}
\end{equation}
then (\ref{eq:ineqCancellationRoundProof2}) is satisfied, thus (\ref{eq:SINRTh})
and (\ref{eq:ineqCancellationRoundProof1}) are also satisfied. 

Substituting (\ref{eq:rhoDef}) into (\ref{eq:whenToWorkCondition1})--(\ref{eq:whenToWorkCondition3}),
we can show that (\ref{eq:whenToWorkCondition1}) is equivalent to
$i<\frac{\left|h_{(i-1),i}\right|^{2}}{\left|h_{(i-2),i}\right|^{2}}+2$,
and (\ref{eq:whenToWorkCondition2}) is equivalent to (\ref{eq:whenToWorkConditionSimplified1}).
Hence, when (\ref{eq:whenToWorkConditionSimplified1}) holds, (\ref{eq:whenToWorkCondition1})
and (\ref{eq:whenToWorkCondition2}) also hold. The right-hand side
(RHS) of (\ref{eq:whenToWorkCondition3}) is equal to the RHS of (\ref{eq:whenToWorkConditionSimplified2}),
thus (\ref{eq:whenToWorkCondition3}) holds when (\ref{eq:whenToWorkConditionSimplified2})
holds. \end{proof}

Proposition~\ref{prop:cancellationRound}, which gives a set of sufficient
conditions, provides a possibly conservative value on the number of
required cancellation rounds $m$, governed by the lower bound in
(\ref{eq:whenToWorkConditionSimplified2}). If (\ref{eq:whenToWorkConditionSimplified1})
is satisfied, packet $x(t_{i-1})$ can be successfully received by
node~$i$ when $m$ satisfies (\ref{eq:whenToWorkConditionSimplified2}).
We note that the minimum value of $m$ satisfying (\ref{eq:whenToWorkConditionSimplified2})
is always finite if (\ref{eq:whenToWorkConditionSimplified1}) (thus
(\ref{eq:whenToWorkCondition1}) and (\ref{eq:whenToWorkCondition2})
in the proof) holds. Therefore, we can say that multi-hop interference
can be effectively cancelled within a finite number of rounds when
(\ref{eq:whenToWorkConditionSimplified1}) holds.

\subsection{How Many Nodes Are Allowed? \label{sub:How-Many-Nodes}}

Condition (\ref{eq:whenToWorkConditionSimplified1}) imposes an upper
bound on $i$, implying that $i$ (thus $N$) cannot be too large.
Assume that $\left|h_{ji}\right|^{2}\propto\frac{1}{d_{ji}^{\alpha}}$,
where $d_{ji}$ is the geographical distance between nodes $j$ and
$i$, and $\alpha$ is the path-loss exponent related to the wireless
environment. Suppose we have equally spaced nodes, such that $d_{(i-2),i}=2d_{(i-1),i}$.
Define a quantity $B>0$ such that 
\begin{equation}
\frac{\left|h_{(i-2),i}\right|^{2}P_{T}}{\sigma^{2}}=\frac{\gamma}{B},\label{eq:BDef}
\end{equation}
thus $\frac{\gamma\sigma^{2}}{\left|h_{(i-2),i}\right|^{2}P_{T}}=B$.
When $B=2-\epsilon$ for an arbitrarily small $\epsilon>0$, condition
(\ref{eq:whenToWorkConditionSimplified1}) becomes
\begin{equation}
i<2^{\alpha}+\epsilon.
\end{equation}
In practical environments, we normally have $2<\alpha<6$, which means
that the proposed approach always works for chain networks with $N=4$
nodes. Whether it works with more nodes depends on the value of $\alpha$
in the environment. For example when $\alpha=4$, we can have $N=16$.
Also note that we are using the sufficient condition given by Proposition~\ref{prop:cancellationRound},
so our results here may be conservative (as we will see in the numerical
results next section), and the proposed approach may work well with
much larger values of $N$.

The value of $B$ in (\ref{eq:BDef}) is related to the placement
of nodes. A smaller value of $B$ implies a larger $\left|h_{(i-2),i}\right|^{2}$.
We should normally have $B>1$ because otherwise nodes $i-2$ and
$i$ can communicate directly (when there is no interference) and
node $i-1$ may be ignored by common routing protocols. Aspects related
to optimal node placement is open for future research. Because we
fix the ratio $\nicefrac{\left|h_{(i-1),i}\right|^{2}}{\left|h_{(i-2),i}\right|^{2}}=2^{\alpha}$
here, a smaller $B$ also yields a larger $\left|h_{(i-1),i}\right|^{2}$.

\section{Numerical Results}

\begin{figure}
\center{\subfigure[]{\includegraphics[width=0.9\linewidth]{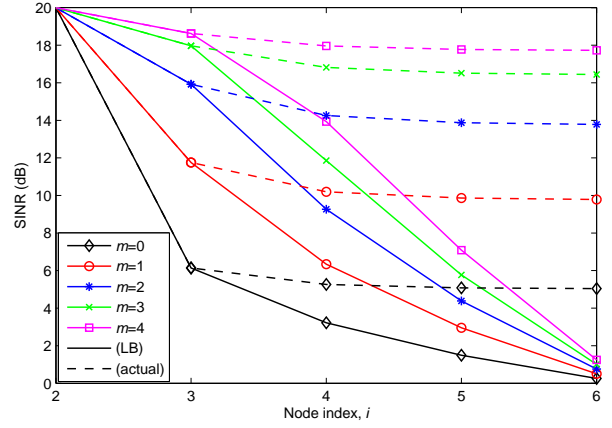}} \\ \subfigure[]{\includegraphics[width=0.9\linewidth]{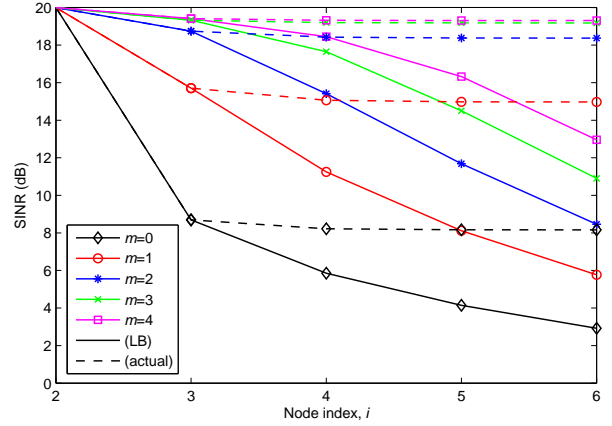}} \\ \subfigure[]{
\includegraphics[width=0.9\linewidth]{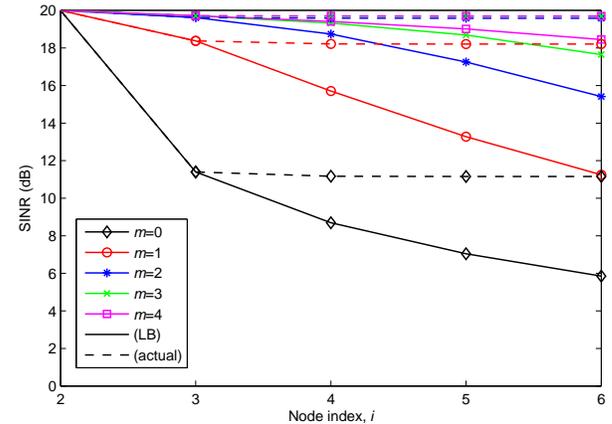}}}

\caption{SINR at node~$i$, under different values of $m$, where (a) $\alpha=2.1$,
(b) $\alpha=3$, (c) $\alpha=4$. The solid lines denote the theoretical
lower bound (LB) and the dashed lines denote the actual values.}
\label{fig:simSINR}
\end{figure}

\begin{figure}
\center{\includegraphics[width=0.7\linewidth]{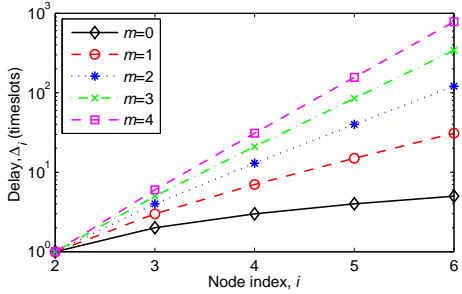}}

\caption{End-to-end delay from node~$1$ to node~$i$, under different values
of~$m$.}
\label{fig:simDelay}
\end{figure}

We now present some numerical results of the proposed interference
cancellation scheme based on the above analysis. 

We first evaluate the SINR after interference cancellation, at different
nodes and with different number of cancellation rounds $m$. The same
static channel model as in Section \ref{sub:How-Many-Nodes} is used
in the evaluation. More realistic cases involving random channel coefficients
is left for future work. We set the signal-to-noise (SNR) ratio of
single-hop transmission (without interference) to $20$~dB, based
on which we can calculate the SINR of $g_{i,m}(t)$. We use two different
methods to calculate the SINR. One is based on the theoretical interference-plus-noise
upper bound in (\ref{eq:PIBoundRho}), which gives an SINR lower bound
(LB); the other is based on the actual signal expression in (\ref{eq:mthIteration2}).
Obviously, the second method gives the actual SINR, but it is also
more complex than the first method. Fig.~\ref{fig:simSINR} shows
the results with different path-loss exponents $\alpha$.

Note that $m=0$ corresponds to the existing E2E-KIC approach without
cancelling unknown multi-hop interferences. We can see from Fig.~\ref{fig:simSINR}
that the proposed approach (with $m>0$) provides a significant gain
in SINR. Even when $m=1$, the gain compared to $m=0$ is over $4$~dB
(based on the actual SINR values) in all cases we evaluate here. The
gain goes higher when $\alpha$ is larger because the multi-hop interference
is more attenuated with a larger $\alpha$. This SINR gain is important
for E2E-KIC to work in scenarios with multi-hop interference. For
example, consider $\alpha=3$ and $\gamma=10$~dB, E2E-KIC would
not work without the proposed approach (i.e., $m=0$) when $N\geq3$,
because the SINR is below the $\gamma=10$~dB threshold; using a
single round of cancellation with the proposed approach (i.e., $m=1$),
the SINR becomes significantly above $\gamma=10$~dB and E2E-KIC
works (at least for $N\in\{3,4,5,6\}$). 

At node $i=2$, the SINR is equal to the single-hop SNR ($20$~dB),
because there is no unknown multi-hop interference at this node. For
nodes $i\geq3$, the SINR is below $20$~dB and depends on the value
of $m$. As expected, a larger $m$ yields a higher SINR. When $m$
is large enough, the SINR becomes very close to the single-hop SNR. 

The theoretical LB is equal to the actual SINR when $i=3$. At $i>3$,
the LB tends to significantly underestimate the SINR and thus the
performance of the proposed approach. This is because the interference-plus-noise
upper bound in (\ref{eq:PIBoundRho}) replaces all $\left|h_{ji}\right|^{2}$
for $j<i-2$ with $\left|h_{(i-2),i}\right|^{2}$, which remarkably
enlarges the interferences from nodes that are three or more hops
away. This validates that the results presented in Section \ref{sub:How-Many-Nodes}
are quite conservative, and the proposed approach should actually
perform much better. Since it is hard to obtain meaningful analytical
results directly from (\ref{eq:mthIteration2}), an interesting direction
for future work is to find tighter theoretical bounds for the interference-plus-noise
power.

Fig. \ref{fig:simDelay} shows the end-to-end delay from node~$1$
to node $i$, i.e., the $\Delta_{i}$ values computed from (\ref{eq:diffDeltaSolution}).
When $i=N$, this corresponds to the total delay from the source node
(node~$1$) to the destination node (node~$N$). We see that the
delay increases with the number of cancellation rounds $m$ and the
node index~$i$. However, a good news is that $m$ usually does not
need to be too large to obtain a reasonably high SINR gain, as seen
in Fig.~\ref{fig:simSINR}. Also note that the ``packets '' in this
paper \emph{do not} need to be full data packets on the network layer.
In fact, they can be as short as a single symbol if a proper scheduling
and synchronization scheme is available. 

Furthermore, the delay is due to the idling time before each node
receives its first packet. Once the first packet has been received,
there is no additional delay for receiving subsequent packets. A large
chain network does not necessarily have all nodes waiting for the
first packet to arrive (or waiting for the whole transmission to complete).
It can schedule some other transmissions in those idling slots instead.
With such a properly designed scheduling mechanism, the throughput
benefit of E2E-KIC can still be maintained although the transmission
delay may be large.

\section{Conclusion}

In this paper, we have studied E2E-KIC with multi-hop interference.
We have noted that among all multi-hop interferences, the unknown
interferences from nodes preceding the current node are the most challenging,
while the other known interferences can be cancelled with existing
KIC approaches. We have therefore proposed an approach that iteratively
reduces the strength of unknown interferences to a sufficiently low
level, so that nodes can successfully receive new packets. Analytical
and numerical results confirm the effectiveness and also provide insights
of the proposed approach.

\bibliographystyle{IEEEtran}
\bibliography{IEEEabrv,E2E-KIC_interference_analysis}

\end{document}